\newtheorem{theorem}{Theorem}[section]
\newtheorem{lemma}[theorem]{Lemma}
\newtheorem{remark}[theorem]{Remark}
\begin{document}

{\large Exact solution of an inverse spherical two phase Stefan problem}
\bigskip

Merey M. Sarsengeldin$^{1,2}$, Abdullah S. Erdogan$^{1}$, Targyn A. Nauryz$^{2}$,  and
Hassan Nouri$^{3}$
\bigskip

$^{1}$ \textit{Sigma LABS, ISE, Kazakh-British Technical University, Almaty,
Kazakhstan }

$^{2}$\textit{\ Institute of Mathematics and Mathematical Modeling, National
Academy of Sciences of Republic of Kazakhstan, Kazakhstan}

$^{3}$ Department of Engineering Design and Mathematics, University of the
West of England, Bristol, UK

\textbf{Abstract} In this paper, we represent the exact solution of a two
phase inverse spherical Stefan problem, where along with unknown temperature
functions heat flux function has to be determined. Suggested solution is
obtained from new form of integral error function and its properties which
are represented in the form of series whose coefficients have to be
determined. The model problem can be used for mathematical modeling and
investigation of arc phenomena in electrical contacts.

MSC: 80A22; 33B20

Keywords: Integral error function; Inverse two phase Stefan problem; Electric contacts

\section{Introduction}

Partial differential equations play an important role for the development of
models in heat conduction and investigated in various aspects (see for
example \cite{a,b,c}\ and the references therein). To realize the physical
changes, some models need to be expressed as free or moving boundary
problems. The theory of free boundaries has seen great progress in the last
half century.  For the general literature up to 2015, we refer to \cite{d,e}%
. Present study is devoted to theoretical investigation and mathematical
modeling of arc phenomena in electrical contacts and appears as a
continuation of recent studies where mathematical modeling of short arcing
is considered \cite{1,2}. Arcing processes are very rapid and include phase
transformations like transition therefore it seems reasonable to use Stefan
type problems for mathematical modeling of this phenomena.

Worth to say that exact solution of the problem allows to elucidate and
enhance understanding of arcing processes and contribute to the development
of the arc theory. In the following model heat flux depends on time
variable, however, it is well known that besides time variable heat flux
depends on numerous factors like electron bombardment and diverse electric
contact effects like tunnel, Joule, Thomson, Peltier effects etc. Thus, this
model does not claim to be universal for all electric contact phenomena
occurring during opening or switching electric contacts.

A long list of studies \cite{5,6,7,8,9} and literature therein are devoted to
Stefan type problems and their solutions. In this study, we consider
spherical model which agrees with Holm's so called ideal sphere usually
applied for electric contacts with small contact surface (radius of $%
b<10^{-4}~m$) and low electric current \cite{4}. As it was shown in previous
studies this model nicely fits physical conditions and agrees with
experimental data \cite{3} where we considered AgCdO and Ni contacts \cite{2}.

\subsection{Problem Statement}

\textbf{\ }In a spherical model, the contact spot is given by the spherical
surface of radius $b$. The heat flux $P(t)$ entering this surface melts the
electric contact material (liquid zone $b<r<\alpha (t)$) and then passes
further through the solid zone $\alpha (t)<r<\infty $ (for the illustration
of the model, see \cite[Figure 1]{2}).

The heat equations for each zone are
\begin{equation}
\frac{\partial \theta _{1}}{\partial t}=a_{1}^{2}\left( \frac{\partial
^{2}\theta _{1}}{\partial r^{2}}+\frac{2}{r}\frac{\partial \theta _{1}}{%
\partial r}\right) ,\quad b<r<\alpha (t),\;t>0,  \label{1}
\end{equation}
\begin{equation}
\frac{\partial \theta _{2}}{\partial t}=a_{2}^{2}\left( \frac{\partial
^{2}\theta _{2}}{\partial r^{2}}+\frac{2}{r}\frac{\partial \theta _{2}}{%
\partial r}\right) ,\quad \alpha (t)<r<\infty ,\;t>0  \label{2}
\end{equation}
with initial condition
\begin{equation}
\theta _{1}(b,0)=T_{m},  \label{3}
\end{equation}
\begin{equation}
\theta _{2}(r,0)=f(r),  \label{4}
\end{equation}
\begin{equation}
f(b)=T_{m},\;\alpha (0)=b,\;f(\infty )=0  \label{5}
\end{equation}
subjected to boundary condition at $r=b$
\begin{eqnarray}
-\lambda _{1}\frac{\partial \theta _{1}(b,t)}{\partial r} &=&P(t)  \label{6}
\end{eqnarray}%
and to free boundary
\begin{equation}
\theta _{1}(\alpha (t),t)=T_{m},  \label{7}
\end{equation}
\begin{equation}
\theta _{2}(\alpha (t),t)=T_{m}  \label{8}
\end{equation}
the Stefan's condition
\begin{equation}
-\lambda _{1}\frac{\partial \theta _{1}(\alpha (t),t)}{\partial r}=-\lambda
_{2}\frac{\partial \theta _{2}(\alpha (t),t)}{\partial r}+L\gamma \frac{%
d\alpha }{dt},  \label{9}
\end{equation}
as well as the condition at infinity
\begin{equation}
\theta _{2}(\infty ,t)=0.  \label{10}
\end{equation}%
Here $\theta _{1}$ and $\theta _{2}$ are unknown heat functions, $P(t)$ is
an unknown heat flux coming from electric arc, $T_{m}$ is a melting
temperature of electric contact material, $f(r)$ is a given function and $%
a_{1}$, $a_{2}$, $\lambda _{1}$, $\lambda _{2}$ and $L\gamma $ are given
constants. In the equation, power balance is described by Stefan's condition
$\left( \ref{9}\right) $.

\section{Problem Solution}

Supposing that the initial and free boundary are analytic functions and can
be expanded in Taylor series as

\begin{equation}
f(r)=\sum_{n=0}^{\infty }\frac{f^{(n)}(b)}{n!}(r-b)^{n},\quad \alpha
(t)=b+\sum_{n=1}^{\infty }\alpha _{n}t^{n/2},  \label{fr}
\end{equation}%
we represent the solution of $\left( \ref{1}\right) -\left( \ref{10}\right) $
in the following form

\begin{equation*}
\theta _{1}(r,t)=\frac{1}{r}\sum_{n=0}^{\infty }(2a_{1}\sqrt{t})^{n}\left[
A_{n}i^{n}erfc\frac{r-b}{2a_{1}\sqrt{t}}+B_{n}i^{n}erfc\frac{b-r}{2a_{1}%
\sqrt{t}}\right] ,
\end{equation*}

\begin{equation*}
\theta _{2}(r,t)=\frac{1}{r}\sum_{n=0}^{\infty }(2a_{2}\sqrt{t})^{n}\left[
C_{n}i^{n}erfc\frac{r-b}{2a_{2}\sqrt{t}}+D_{n}i^{n}erfc\frac{b-r}{2a_{2}%
\sqrt{t}}\right] ,
\end{equation*}
where coefficients $A_{n} ,B_{n} ,\, C_{n} $and $D_{n} $ have to be found.

\begin{lemma} \label{lemma} The integral error function holds the following properties:

\begin{enumerate}[(a)]
\item $\mathop{\lim }\limits_{x\rightarrow \infty }\frac{i^{n}erfc(-x)}{x^{n}%
}=\frac{2}{n!},$

\item ${\mathop{\lim }\limits_{t\rightarrow 0}(2a_{1}\sqrt{t}%
)^{n}C_{n}i^{n}erfc\frac{r-b}{2a_{1}\sqrt{t}}=0,}$

\item ${\mathop{\lim }\limits_{t\rightarrow 0}(2a_{1}\sqrt{t}%
)^{n}D_{n}i^{n}erfc\frac{b-r}{2a_{1}\sqrt{t}}=\frac{2}{n!}D_{n}(r-b)^{n}.}$
\end{enumerate}
\end{lemma}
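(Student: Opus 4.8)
The plan is to establish each item essentially by unwinding the definition of the repeated integrals of the complementary error function $i^{n}\mathrm{erfc}$ and analyzing the relevant limits.

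For part (a), I would start from the classical asymptotic behavior of $i^{n}\mathrm{erfc}(z)$. Recall that $i^{0}\mathrm{erfc}(z)=\mathrm{erfc}(z)$, so that $i^{0}\mathrm{erfc}(-x)\to 2$ as $x\to\infty$, giving the case $n=0$. For general $n$ the cleanest route is the integral representation $i^{n}\mathrm{erfc}(z)=\int_{z}^{\infty}i^{n-1}\mathrm{erfc}(s)\,ds$ together with the identity $i^{n}\mathrm{erfc}(z)+i^{n}\mathrm{erfc}(-z)$ being a polynomial of degree $n$ in $z$ with leading coefficient $2/n!$ (this follows by induction from the recurrence $i^{n}\mathrm{erfc}(z) = -\tfrac{z}{n}\,i^{n-1}\mathrm{erfc}(z) + \tfrac{1}{2n}\,i^{n-2}\mathrm{erfc}(z)$, or by repeatedly integrating the even/odd decomposition of $\mathrm{erfc}$). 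Since $i^{n}\mathrm{erfc}(x)\to 0$ faster than any power as $x\to\infty$, dividing by $x^{n}$ and letting $x\to\infty$ kills all lower-order terms and the $i^{n}\mathrm{erfc}(x)$ term, leaving exactly $2/n!$. I expect the main obstacle here to be pinning down the leading coefficient of that polynomial cleanly; doing the induction on the recurrence relation is the safe way to avoid hand-waving.

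Parts (b) and (c) are then corollaries obtained by substituting $x = \tfrac{r-b}{2a_{1}\sqrt{t}}$ (respectively $x = \tfrac{b-r}{2a_{1}\sqrt{t}}$) and sending $t\to 0^{+}$, so that $x\to +\infty$. For (b), the argument $\tfrac{r-b}{2a_{1}\sqrt t}\to +\infty$ (assuming $r>b$), and since $i^{n}\mathrm{erfc}$ decays faster than any polynomial at $+\infty$ while the prefactor $(2a_{1}\sqrt t)^{n} = (r-b)^{n}/x^{n}$ grows only polynomially in $x$, the product $(2a_{1}\sqrt t)^{n} C_{n}\, i^{n}\mathrm{erfc}(x) = C_{n}(r-b)^{n}\, x^{-n} i^{n}\mathrm{erfc}(x)\to 0$. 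For (c), the argument is $\tfrac{b-r}{2a_{1}\sqrt t}\to -\infty$, so I write $(2a_{1}\sqrt t)^{n} D_{n}\, i^{n}\mathrm{erfc}\!\big(\tfrac{b-r}{2a_{1}\sqrt t}\big) = D_{n}(r-b)^{n}\, \dfrac{i^{n}\mathrm{erfc}(-x)}{x^{n}}$ with $x=\tfrac{r-b}{2a_{1}\sqrt t}\to+\infty$, and apply part (a) directly to get the limit $\tfrac{2}{n!}D_{n}(r-b)^{n}$.

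The only genuinely delicate point is the rapid decay claim $x^{-n} i^{n}\mathrm{erfc}(x)\to 0$ used in (b); this follows from the standard asymptotic $i^{n}\mathrm{erfc}(x)\sim \dfrac{2}{\sqrt{\pi}\,(2x)^{n+1}}e^{-x^{2}}$ as $x\to+\infty$, which itself is an easy induction on $n$ using $i^{n}\mathrm{erfc}(x)=\int_{x}^{\infty} i^{n-1}\mathrm{erfc}(s)\,ds$ and integration by parts. Everything else is bookkeeping with the substitution $x=(r-b)/(2a_{1}\sqrt t)$.
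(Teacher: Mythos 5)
Your proposal is correct and is in fact considerably more explicit than the paper, which dispenses with the lemma in a single line by appealing to ``L'Hopital's rule and properties of the $i^{n}\mathrm{erfc}\,x$ function.'' The argument the paper has in mind for (a) is presumably the induction via L'Hopital: since $\frac{d}{dx}\,i^{n}\mathrm{erfc}(-x)=i^{n-1}\mathrm{erfc}(-x)$, one gets $\lim_{x\to\infty} i^{n}\mathrm{erfc}(-x)/x^{n}=\lim_{x\to\infty} i^{n-1}\mathrm{erfc}(-x)/(nx^{n-1})=\tfrac{1}{n}\cdot\tfrac{2}{(n-1)!}=\tfrac{2}{n!}$, with base case $\mathrm{erfc}(-x)\to 2$. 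Your route instead extracts the polynomial part of $i^{n}\mathrm{erfc}$ at $-\infty$ from the three-term recurrence and discards the super-polynomially decaying remainder. Both work; the L'Hopital induction is slightly shorter because it never needs the leading coefficient of the Hermite-type polynomial explicitly, while your version yields the sharper asymptotic $i^{n}\mathrm{erfc}(-x)=\tfrac{2}{n!}x^{n}+O(x^{n-1})$ as a byproduct. Your reduction of (b) and (c) to (a) by the substitution $x=(r-b)/(2a_{1}\sqrt{t})$ is exactly what is needed, and you correctly flag that (b) requires $r>b$ (which holds, since $\theta_{2}$ lives in $\alpha(t)<r<\infty$ with $\alpha(0)=b$).

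One correction to an auxiliary claim: $i^{n}\mathrm{erfc}(z)+i^{n}\mathrm{erfc}(-z)$ is \emph{not} a polynomial for odd $n$; for $n=1$ it equals $\tfrac{2}{\sqrt{\pi}}e^{-z^{2}}+2z\,\mathrm{erf}(z)$. The polynomial identity requires the alternating sign: it is $i^{n}\mathrm{erfc}(z)+(-1)^{n}\,i^{n}\mathrm{erfc}(-z)$ that is a degree-$n$ polynomial. This does not damage your argument, because all you actually use is that $i^{n}\mathrm{erfc}(-x)$ equals a degree-$n$ polynomial with leading coefficient $2/n!$ plus a term vanishing as $x\to+\infty$, and that statement is true; but the identity should be stated with the sign (or replaced outright by the L'Hopital induction, which avoids the issue).
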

The proof of the lemma can be given by the L'Hopital's rule and properties
of $i^{n}erfcx$ function.

\begin{theorem} Let $f$  be $n$ times differentiable analytic function.
Then

\begin{equation*}
\lim\limits_{t\rightarrow 0}\theta _{2}(r,t)=\frac{1}{r}\sum_{n=0}^{\infty }%
\frac{2}{n!}D_{n}(r-b)^{n}.
\end{equation*}
\end{theorem}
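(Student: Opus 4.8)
Fix $r>b$. The plan is to evaluate the limit by carrying it inside the series defining $\theta_{2}(r,t)$, and then to justify that interchange. Since $r>b$, we have $\frac{r-b}{2a_{2}\sqrt{t}}\to+\infty$ and $\frac{b-r}{2a_{2}\sqrt{t}}\to-\infty$ as $t\to0^{+}$. For each fixed $n$, Lemma~\ref{lemma}(b) --- which is stated with $a_{1}$ but holds verbatim with $a_{2}$ in its place, the only feature used being that $i^{n}erfc$ vanishes rapidly at $+\infty$ --- gives
\[
\lim_{t\to0^{+}}(2a_{2}\sqrt{t})^{n}C_{n}\,i^{n}erfc\frac{r-b}{2a_{2}\sqrt{t}}=0,
\]
while Lemma~\ref{lemma}(c), a direct consequence of Lemma~\ref{lemma}(a) (again with $a_{2}$ replacing $a_{1}$), gives
\[
\lim_{t\to0^{+}}(2a_{2}\sqrt{t})^{n}D_{n}\,i^{n}erfc\frac{b-r}{2a_{2}\sqrt{t}}=\frac{2}{n!}D_{n}(r-b)^{n}.
\]
Adding the two contributions and dividing by $r$ shows that the $n$-th term of $\theta_{2}(r,t)$ tends to $\frac{1}{r}\cdot\frac{2}{n!}D_{n}(r-b)^{n}$, so the assertion follows as soon as the limit may be taken under the summation sign.

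To justify that, the plan is to bound the $n$-th term by a quantity that is independent of $t$ on a fixed interval $(0,\delta]$ and summable in $n$. Two standard facts are available. First, $i^{n}erfc$ is positive and strictly decreasing on $\mathbb{R}$ (its derivative equals $-i^{n-1}erfc<0$), so on $[0,\infty)$ one has $0<i^{n}erfc(y)\le i^{n}erfc(0)=\frac{1}{2^{n}\Gamma(1+n/2)}$, which dominates the $C_{n}$-part of the $n$-th term by $\frac{(a_{2}\sqrt{t})^{n}|C_{n}|}{\Gamma(1+n/2)}$. Second, from the representation $i^{n}erfc(z)=\frac{2}{\sqrt{\pi}}\int_{z}^{\infty}\frac{(s-z)^{n}}{n!}e^{-s^{2}}\,ds$ one gets, on setting $z=-x$ and expanding $(s+x)^{n}$, a reflection-type formula writing $i^{n}erfc(-x)$ as a polynomial in $x$ of degree $n$ with leading term $\frac{2}{n!}x^{n}$ plus a super-exponentially small correction; multiplying by $(2a_{2}\sqrt{t})^{n}$ and inserting $x=\frac{r-b}{2a_{2}\sqrt{t}}$ turns the polynomial terms into $\frac{2}{n!}(r-b)^{n}$ together with contributions each carrying a positive power of $\sqrt{t}$, hence bounded on $(0,\delta]$. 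The analyticity hypothesis on $f$ enters exactly here: once $C_{n}$ and $D_{n}$ have been expressed through the Taylor data $f^{(k)}(b)$ via the recursion obtained by substituting the ansatz into $(\ref{1})$--$(\ref{10})$, the geometric decay of those Taylor coefficients forces $|C_{n}|,|D_{n}|\le MR^{n}$ for suitable constants, and combined with the gains $1/\Gamma(1+n/2)$ and $1/n!$ above this makes both majorant series converge for $\delta$ small and $r$ inside the disk of convergence of $f$ about $b$. The Weierstrass $M$-test (equivalently, dominated convergence for series) then licenses passing to the limit termwise, and we conclude
\[
\lim_{t\to0^{+}}\theta_{2}(r,t)=\frac{1}{r}\sum_{n=0}^{\infty}\frac{2}{n!}D_{n}(r-b)^{n}.
\]

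The termwise limits are immediate from Lemma~\ref{lemma}; the genuine work --- and the step I expect to be the main obstacle --- is the $t$-uniform estimate on the tail of the series, i.e.\ exhibiting the summable majorant just described, and this is precisely where the hypothesis that $f$ be analytic is used. As a consistency check, the limit just computed matches the prescribed initial value $\theta_{2}(r,0)=f(r)$ exactly when the coefficients $D_{n}$ are subsequently fixed so that $\sum_{n\ge0}\frac{2}{n!}D_{n}(r-b)^{n}=r\,f(r)$.
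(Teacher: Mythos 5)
Your core computation is exactly the paper's proof: the $C_{n}$ terms are discarded by Lemma~\ref{lemma}(b), the $D_{n}$ term is rewritten as $D_{n}(r-b)^{n}\cdot i^{n}erfc(-x)/x^{n}$ with $x=\frac{r-b}{2a_{2}\sqrt{t}}\to\infty$, and Lemma~\ref{lemma}(a) is applied termwise. Where you genuinely differ is that you flag the interchange of $\lim_{t\to0}$ with $\sum_{n}$ as the step that actually needs proof and sketch a Weierstrass $M$-test majorant; the paper performs this interchange silently, so your version is the more careful one. Two caveats about your sketch, though. First, analyticity of $f$ gives geometric decay of the Taylor \emph{coefficients} $f^{(n)}(b)/n!$, so $|D_{n}|=\tfrac12 r|f^{(n)}(b)|\le M\,n!\,R^{n}$, not $|D_{n}|\le MR^{n}$; your leading majorant $\frac{2}{n!}|D_{n}||r-b|^{n}$ still converges for $r$ inside the disk of convergence of $f$, but the bound as stated is off by a factor of $n!$. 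Second, and more substantively, your majorant for the $C_{n}$ part, $(a_{2}\sqrt{t})^{n}|C_{n}|/\Gamma(1+n/2)$, is summable only given growth control on the $C_{n}$, about which the theorem's hypotheses say nothing (they are determined only later, from $(\ref{15})$); if $|C_{n}|$ also grows like $n!R^{n}$, then $n!/\Gamma(1+n/2)$ grows superexponentially and this majorant diverges for every fixed $t>0$. That is really a symptom of the convergence of the series for $\theta_{2}$ itself, which the paper defers to its Remark rather than settling. So your added rigor is correctly located but cannot be completed from the stated hypotheses alone; the termwise limits you compute already reproduce everything the paper's proof establishes.
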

\begin{proof} Using Lemma \ref{lemma}, it is easy to see that%
\begin{equation*}
\lim\limits_{t\rightarrow 0}\theta _{2}(r,t)=\lim\limits_{t\rightarrow 0}%
\frac{1}{r}\sum_{n=0}^{\infty }(2a_{2}\sqrt{t})^{n}{D_{n}i^{n}erfc\frac{b-r}{%
2a_{2}\sqrt{t}}}
\end{equation*}%
\begin{equation*}
=\frac{1}{r}\lim\limits_{t\rightarrow 0}\sum_{n=0}^{\infty }D_{n}\left(
r-b\right) ^{n}\frac{{i^{n}erfc}\left( -{\frac{r-b}{2a_{2}\sqrt{t}}}\right)
}{\left( {\frac{r-b}{2a_{2}\sqrt{t}}}\right) ^{n}}=\frac{1}{r}%
\sum_{n=0}^{\infty }\frac{2}{n!}D_{n}\left( r-b\right) ^{n}.
\end{equation*}
\end{proof}
\subsection{Calculation of coefficients}

\bigskip By the theorem and equations $\left( \ref{4}\right) $ and $\left( %
\ref{fr}\right) $, we can write
\begin{equation*}
\lim\limits_{t\rightarrow 0}\theta _{2}(r,t)=\frac{1}{r}\sum_{n=0}^{\infty }%
\frac{2}{n!}D_{n}\left( r-b\right) ^{n}=f\left( r\right)
=\sum_{n=0}^{\infty }\frac{f^{(n)}(b)}{n!}(r-b)^{n}.
\end{equation*}%
Thus, we get
\begin{equation*}
D_{n}=\frac{1}{2}rf^{(n)}(b).
\end{equation*}

From $\left( \ref{8}\right) $ when we put $r=\alpha (t)$ then $b$ will be
canceled and there left only series $\alpha (t)=\alpha _{1}\sqrt{t}+\alpha
_{2}t+\alpha _{3}\sqrt{t^{3}}+...=\sum\limits_{n=1}^{\infty }\alpha
_{1}t^{n/2}$ in Hartree function if we take $\sqrt{t}$ out of the brackets

\begin{equation*}
i^{n}erfc\frac{\sqrt{t}(\alpha _{1}+\alpha _{2}\sqrt{t}+\alpha _{3}t+...)}{%
2a_{1}\sqrt{t}}=i^{n}erfc\frac{\alpha (t)}{2a_{1}},
\end{equation*}%
where
\begin{equation*}
\alpha (t)=\alpha _{1}+\alpha _{2}\sqrt{t}+\alpha _{3}t+\cdots
=\sum\limits_{n=1}^{\infty }\alpha _{n}t^{\frac{n-1}{2}}.
\end{equation*}

Let's take $\sqrt{t}=\tau $ and we obtain from equation $\left( \ref{8}%
\right) $

\begin{equation}
\frac{1}{\alpha (\tau )}\sum_{n=0}^{\infty }(2a_{2}\sqrt{t})^{n}\left[
C_{n}i^{n}erfc\delta (\tau )+D_{n}i^{n}erfc(-\delta (\tau ))\right] =T_{m},
\label{14}
\end{equation}

where $\delta (\tau )=\frac{\alpha (\tau )}{2a_{2}}.$

To calculate coefficient $C_{n} $ we apply Leibniz, Faa Di Bruno's formulas
and Bell polynomials. Using Leibniz formula we have

\begin{equation*}
\frac{\partial ^{k}[2^{n/2}\tau ^{n}i^{n}erfc\delta ]}{\partial \tau ^{k}}%
\left\vert
\begin{array}{l}
\\
{\tau =0}%
\end{array}%
\right. =\left\{
\begin{array}{l}
{0,\;for\;k<n} \\
{\frac{2^{n/2}k!}{(k-n)!}[i^{n}erfc\delta ]^{(k-n)},\;for\;k\geq n}%
\end{array}%
\right. .\;
\end{equation*}

Using Faa Di Bruno's formula and Bell polynomials for a derivative of a
composite function we have

\begin{equation*}
\frac{\partial ^{(k-n)}[i^{n}erfc(\pm \delta )]}{\partial \tau ^{k-n}}%
\left\vert
\begin{array}{l}
\\
{\tau =0}%
\end{array}%
\right. =\sum_{m=1}^{k-n}(i^{n}erfc(\pm \delta ))^{(m)}\left\vert
\begin{array}{l}
\\
{\delta =0}%
\end{array}%
\right. B_{k-n,m}(\delta ^{\prime (k-n-m+1)}(\tau ))|_{\tau =0},
\end{equation*}

where
\begin{equation*}
B_{k-n,m}=\sum \frac{(k-n)!}{j_{1}!j_{2}!...j_{k-n-m+1}!}\cdot \beta
_{1}^{j_{1}}\beta _{2}^{j_{2}}\beta _{3}^{j_{3}}\cdots \beta
_{k-n-m+1}^{j_{k-n-m+1}}
\end{equation*}%
and $j_{1},j_{2},\cdots $ satisfy the following equations

\begin{equation*}
\begin{array}{l}
{j_{1}+j_{2}+...+j_{k-n-m+1}=m,} \\
{j_{1}+2j_{2}+...+(k-n-m+1)j_{k-n-m+1}=k-n}%
\end{array}%
\end{equation*}
for $m\geq n$

\begin{equation*}
[i^{n} erfc(\pm \delta )]^{(m)} |_{\alpha =0} =(-1)^{m} i^{n-m} erfc0=(\mp
1)^{m} \frac{\Gamma \left(\frac{n-m+1}{2} \right)}{(n-m!)\sqrt{\pi } }
\end{equation*}

by taking both sides of $\left( \ref{14}\right) $ k-times derivatives at $%
\tau =0$ we have

\begin{equation}
\begin{array}{l}
{\left\{ \sum\limits_{m=1}^{k-p-n}(-1)^{m}\frac{1}{\alpha ^{m+1}}\sum \frac{%
n!\alpha _{1}^{j_{1}}\alpha _{2}^{j_{2}}...\alpha _{n-m+1}^{j_{n-m+1}}}{%
j_{1}!j_{2}!...j_{n-m+1}!}\,\right\} \cdot \sum\limits_{p=0}^{k}\left(
\begin{array}{l}
{k} \\
{p}%
\end{array}%
\right) \left\{ \sum\limits_{n=0}^{\infty }\frac{2^{n/2}(k-p)!}{(k-p-n)!}%
\right. } \\
{\left( (-1)^{n}C_{n}\sum_{m=1}^{k-p-n}i^{n-m}erfc\delta _{1}\sum \frac{%
(k-p)!\delta _{1}^{j_{1}}\delta _{2}^{j_{2}}...\delta
_{k-p-n-m+1}^{j_{k-p-n-m+1}}}{j_{1}!j_{2}!...j_{k-p-n-m+1}!}\right. } \\
{\left. \left. {+}D_{n}\sum_{m=1}^{k-p-n}i^{n-m}erfc(-\delta _{1})\sum \frac{%
(k-p)!\delta _{1}^{j_{1}}\delta _{2}^{j_{2}}...\delta
_{k-p-n-m+1}^{j_{k-p-n-m+1}}}{j_{1}!j_{2}!...j_{k-p-n-m+1}!}\right) \right\}
=0.}%
\end{array}
\label{15}
\end{equation}

From expression $\left( \ref{15}\right) $ we express coefficients $C_{n}$.
From $\left( \ref{7}\right) $ condition we have

\begin{equation*}
\frac{1}{\alpha (\tau )}\sum_{n=0}^{\infty }(2a_{1}\sqrt{t})^{n}\left[
A_{n}i^{n}erfc\xi (\tau )+B_{n}i^{n}erfc(-\xi (\tau ))\right] =T_{m}
\end{equation*}
where\qquad
\begin{equation}
\xi (\tau )=\frac{\alpha (\tau )}{2a_{1}}.  \label{16}
\end{equation}

As previously by taking both sides of $\left( \ref{16}\right) $ k-times
derivatives $\tau =0$ by using Leibniz, Faa Di Bruno's formulas and Bell
polynomials we have

\begin{equation}
\begin{array}{l}
{\left\{ \sum_{m=1}^{k-p-n}(-1)^{m}\frac{1}{\alpha ^{m+1}}\sum \frac{%
n!\alpha _{1}^{j_{1}}\alpha _{2}^{j_{2}}...\alpha _{n-m+1}^{j_{n-m+1}}}{%
j_{1}!j_{2}!...j_{n-m+1}!}\,\right\} \cdot \sum_{p=0}^{k}\left(
\begin{array}{l}
{k} \\
{p}%
\end{array}%
\right) \left\{ \sum_{n=0}^{\infty }\frac{2^{n/2}(k-p)!}{(k-p-n)!}\right. }
\\
{\left( (-1)^{n}A_{n}\sum_{m=1}^{k-p-n}i^{n-m}erfc\xi _{1}\sum \frac{%
(k-p)!\xi _{1}^{j_{1}}\xi _{2}^{j_{2}}...\xi _{k-p-n-m+1}^{j_{k-p-n-m+1}}}{%
j_{1}!j_{2}!...j_{k-p-n-m+1}!}\right. } \\
{\left. \left. {+}B_{n}\sum_{m=1}^{k-p-n}i^{n-m}erfc(-\xi _{1})\sum \frac{%
(k-p)!\xi _{1}^{j_{1}}\xi _{2}^{j_{2}}...\xi _{k-p-n-m+1}^{j_{k-p-n-m+1}}}{%
j_{1}!j_{2}!...j_{k-p-n-m+1}!}\right) \right\} =0.}%
\end{array}
\label{17}
\end{equation}

We can express from this expression $A_{n}$ coefficient. In Stefan's
condition $\left( \ref{9}\right) $ we take first derivative of
\begin{equation*}
\frac{d\alpha }{dt}=\alpha _{1}\frac{1}{2\sqrt{t}}+\alpha _{2}+\frac{3}{2}%
\alpha _{3}\sqrt{t}+2\alpha _{4}t+...=\sum_{n=1}^{\infty }\frac{n}{2}\alpha
_{n}t^{\frac{n-2}{2}}
\end{equation*}%
and we take $\sqrt{t}=\tau $

\begin{equation}
\begin{array}{l}
{\lambda _{1}\frac{1}{\alpha ^{2}(\tau )}\sum_{n=0}^{\infty }(2a_{1}\tau
)^{n}\left[ A_{n}i^{n}erfc\xi (\tau )+B_{n}i^{n}erfc(-\xi (\tau ))\right] }
\\
{-\lambda _{1}\frac{1}{\alpha (\tau )}\sum_{n=0}^{\infty }(2a_{1}\tau )^{n-1}%
\left[ -A_{n}i^{n-1}erfc\xi (\tau )+B_{n}i^{n-1}erfc(-\xi (\tau ))\right] }
\\
{=\lambda _{2}\frac{1}{\alpha ^{2}(\tau )}\sum_{n=0}^{\infty }(2a_{2}\tau
)^{n}\left[ C_{n}i^{n}erfc\delta (\tau )+D_{n}i^{n}erfc(-\delta (\tau ))%
\right] } \\
{-\lambda _{2}\frac{1}{\alpha (\tau )}\sum_{n=0}^{\infty }(2a_{1}\tau )^{n-1}%
\left[ -C_{n}i^{n-1}erfc\delta (\tau )+D_{n}i^{n-1}erfc(-\delta (\tau ))%
\right] +\alpha (\tau )L\gamma }%
\end{array}
\label{18}
\end{equation}

where $\alpha (\tau )=\sum_{n=1}^{\infty }\frac{n}{2}\alpha _{n}\tau ^{n-2}.$

If multiply both sides of $\left( \ref{18}\right) $ by $\alpha (\tau )$ and
using conditions $\left( \ref{7}\right) $,$\left( \ref{8}\right) $ we have
the following expression

\begin{equation}
\begin{array}{l}
{\lambda _{1}\frac{T_{m}}{\alpha (\tau )}-\lambda _{1}\sum_{n=0}^{\infty
}(2a_{1}\tau )^{n-1}\left[ -A_{n}i^{n-1}erfc\xi (\tau
)+B_{n}i^{n-1}erfc(-\xi (\tau ))\right] } \\
{=\lambda _{2}\frac{T_{m}}{\alpha (\tau )}-\lambda _{2}\sum_{n=0}^{\infty
}(2a_{2}\tau )^{n-1}\left[ -C_{n}i^{n-1}erfc\delta (\tau
)+D_{n}i^{n-1}erfc(-\delta (\tau ))\right] +\beta (\tau )L\gamma ,}%
\end{array}
\label{19}
\end{equation}

where
\begin{equation*}
\beta (\tau )=\frac{b}{2}\sum_{n=1}^{\infty }n\alpha _{n}\tau ^{n-2}+\frac{1%
}{2}\sum_{n=1}^{\infty }n\alpha _{n}\tau ^{2(n-1)}.
\end{equation*}

Previously by taking both sides of $\left( \ref{19}\right) $ k-times
derivatives at $\tau =0$ by using Leibniz, Faa Di Bruno's formulas and Bell
polynomials we obtain

\begin{equation}
\begin{array}{l}
{\left\{ \lambda _{1}T_{m}\sum_{m=1}^{k-p-n}(-1)^{m}\frac{1}{\alpha ^{m+1}}%
\sum \frac{n!\alpha _{1}^{j_{1}}\alpha _{2}^{j_{2}}...\alpha
_{n-m+1}^{j_{n-m+1}}}{j_{1}!j_{2}!...j_{n-m+1}!}\,\right\} \cdot
\sum_{p=0}^{k}\left(
\begin{array}{l}
{k} \\
{p}%
\end{array}%
\right) \left\{ \sum_{n=0}^{\infty }\frac{2^{n/2}(k-p)!}{(k-p-n-1)!}\right. }
\\
{\left( (-1)^{n+1}A_{n}\sum_{m=1}^{k-p-n-1}i^{n-m-1}erfc\xi _{1}\sum \frac{%
(k-p)!\xi _{1}^{j_{1}}\xi _{2}^{j_{2}}...\xi _{k-p-n-m}^{j_{k-p-n-m}}}{%
j_{1}!j_{2}!...j_{k-p-n-m}!}\right. } \\
{\left. \left. {+}B_{n}\sum_{m=1}^{k-p-n-1}i^{n-m-1}erfc(-\xi _{1})\sum
\frac{(k-p)!\xi _{1}^{j_{1}}\xi _{2}^{j_{2}}...\xi _{k-p-n-m}^{j_{k-p-n-m}}}{%
j_{1}!j_{2}!...j_{k-p-n-m}!}\right) \right\} } \\
{=\left\{ \lambda _{2}T_{m}\sum_{m=1}^{k-p-n}(-1)^{m}\frac{1}{\alpha ^{m+1}}%
\sum \frac{n!\alpha _{1}^{j_{1}}\alpha _{2}^{j_{2}}...\alpha
_{n-m+1}^{j_{n-m+1}}}{j_{1}!j_{2}!...j_{n-m+1}!}\,\right\} \cdot
\sum_{p=0}^{k}\left(
\begin{array}{l}
{k} \\
{p}%
\end{array}%
\right) \left\{ \sum_{n=0}^{\infty }\frac{2^{n/2}(k-p)!}{(k-p-n-1)!}\right. }
\\
{\left( (-1)^{n+1}C_{n}\sum_{m=1}^{k-p-n-1}i^{n-m-1}erfc\delta _{1}\sum
\frac{(k-p)!\delta _{1}^{j_{1}}\delta _{2}^{j_{2}}...\delta
_{k-p-n-m}^{j_{k-p-n-m}}}{j_{1}!j_{2}!...j_{k-p-n-m}!}\right. } \\
{+\left. \left. D_{n}\sum_{m=1}^{k-p-n-1}i^{n-m-1}erfc(-\delta _{1})\sum
\frac{(k-p)!\delta _{1}^{j_{1}}\delta _{2}^{j_{2}}...\delta
_{k-p-n-m}^{j_{k-p-n-m}}}{j_{1}!j_{2}!...j_{k-p-n-m}!}\right) \right\} +%
\frac{n!}{2}\left( b\alpha _{k}+\alpha _{k}^{2}\right) L\gamma .}%
\end{array}
\label{20}
\end{equation}

From this recurrent formula we can express $B_{n}$. By using all these
expression on condition $\left( \ref{6}\right) $ we express coefficient of
heat flux. From $\left( \ref{6}\right) $, we get

\begin{equation}
\begin{array}{l}
{\frac{\lambda _{1}}{b^{2}}\sum_{n=0}^{\infty }(2a_{1}\sqrt{t})^{n}\left[
A_{n}i^{n}erfc0+B_{n}i^{n}erfc0\right] } \\
{\quad \quad +\frac{\lambda _{1}}{b}\sum_{n=0}^{\infty }(2a_{1}\sqrt{t}%
)^{n-1}\left[ A_{n}i^{n-1}erfc0-B_{n}i^{n-1}erfc0\right] =\sum_{n=0}^{\infty
}P_{n}t^{n/2},}%
\end{array}
\label{21}
\end{equation}

\begin{equation*}
\left\{
\begin{array}{l}
{P_{0}=\frac{\lambda _{1}}{b}i^{0}erfc0[A_{1}-B_{1}+\frac{1}{b}A_{0}+\frac{1%
}{b}B_{0}],} \\
{P_{1}=\frac{\lambda _{1}}{b}(2a_{1})i^{1}erfc0[A_{2}-B_{2}+\frac{1}{b}A_{1}+%
\frac{1}{b}B_{1}],} \\
{P_{2}=\frac{\lambda _{1}}{b}(2a_{1})^{2}i^{2}erfc0[A_{3}-B_{3}+\frac{1}{b}%
A_{2}+\frac{1}{b}B_{2}],} \\
\vdots \\
{P_{n}=\frac{\lambda _{1}}{b}(2a_{1})^{n}i^{n}erfc0[A_{n+1}-B_{n+1}+\frac{1}{%
b}A_{n}+\frac{1}{b}B_{n}].}%
\end{array}%
\right.
\end{equation*}

\begin{remark} For the convergence of temperature
functions $\Theta _{1},\Theta _{2}$, it is possible
to follow the idea proposed in \cite{2}.
\end{remark}
\section{Approximate solution of a test problem}

To show the effectiveness of the method we proposed the following problem.
In this section we show that it is possible to reach $3.5\%$ error using
only three points $t_{0}=0$, $t_{1}=0.5$ and $t_{2}=1$ by collocation method
which seems to be very practical for engineers.

It is well known that by the help of substitution $\Theta _{i}=\frac{U_{i}}{r%
}$ it is not difficult to show that spherical heat equation $\frac{\partial
\theta _{i}}{\partial t}=a_{i}^{2}\left( \frac{\partial ^{2}\theta _{i}}{%
\partial r^{2}}+\frac{2}{r}\frac{\partial \theta _{i}}{\partial r}\right) $
can be reduced to the well-known heat equation $\frac{\partial u_{i}}{%
\partial t}=a_{i}\frac{\partial ^{2}u_{i}}{\partial x^{2}}$ where $i=1,2$.
Thus problem $\left( \ref{1}\right) $-$\left( \ref{10}\right) $ can be
reduced to the following problem with slightly modified and simplified
boundary conditions without the loss of generality. Solution is found both
exactly and approximately. Let us consider

\begin{equation}  \label{22}
\begin{array}{l}
{\frac{\partial u_{1} }{\partial t} =a_{1} \frac{\partial ^{2} u_{1} }{%
\partial x^{2} } ,\; 0<x<\alpha (t),\; t>0,} \\
{\frac{\partial u_{2} }{\partial t} =a_{2} \frac{\partial ^{2} u_{2} }{%
\partial x^{2} } ,\; \alpha (t)<x<\infty ,\; t>0,}%
\end{array}%
\end{equation}

\begin{equation}
u_{1}(0,0)=0,\;\,u_{2}(x,0)=f(x),\text{ }f(0)=u_{m},  \label{23}
\end{equation}
\begin{equation}
u_{1}(\alpha (t),t)=u_{2}(\alpha (t),t)=u_{m},  \label{24}
\end{equation}
\begin{equation}
u_{1}(0,t)=P(t),  \label{25}
\end{equation}
\begin{equation}
-\lambda _{1}\frac{\partial u_{1}}{\partial x}=-\lambda _{2}\frac{\partial
u_{2}}{\partial x}+L\gamma \frac{d\alpha }{dt}.  \label{26}
\end{equation}

Here $u_{m}$ is a constant and $\alpha (t)=\alpha \sqrt{t}$ is chosen as a
moving boundary.

We represent solution of $\left( \ref{22}\right) $-$\left( \ref{26}\right) $
in the following form

\begin{equation*}
\begin{array}{l}
{u_{1}(x,t)=\sum_{n=0}^{\infty }(2a_{1}\sqrt{t})^{n}\left[ A_{n}i^{n}erfc%
\frac{x}{2a_{1}\sqrt{t}}+B_{n}i^{n}erfc\frac{-x}{2a_{1}\sqrt{t}}\right] ,}
\\
{u_{2}(x,t)=\sum_{n=0}^{\infty }(2a_{2}\sqrt{t})^{n}\left[ C_{n}i^{n}erfc%
\frac{x}{2a_{2}\sqrt{t}}+D_{n}i^{n}erfc\frac{-x}{2a_{2}\sqrt{t}}\right] ,}%
\end{array}%
\end{equation*}
and
\begin{equation*}
P(t)=\sum\limits_{n=0}^{\infty }P_{n}(\sqrt{t})^{n},
\end{equation*}
where $P_{n},$ $A_{n},$ $B_{n},$ $C_{n},$ and $D_{n}$ have to be determined.
Following the same analogy given in the previous section we find
\begin{equation}
D_{n}=\frac{1}{2}f^{(n)}(0),\;\;n=0,1,2,\cdots ,  \label{27}
\end{equation}
\begin{equation}
C_{0}=\frac{1}{2}u_{m},  \label{28}
\end{equation}
\begin{equation}
C_{n}=-\frac{1}{2}f^{(n)}(0)\frac{i^{n}erfc\frac{-\alpha }{2a_{2}}}{i^{n}erfc%
\frac{\alpha }{2a_{2}}},  \label{29}
\end{equation}
\begin{equation}
A_{0}=\frac{u_{m}-B_{0}erfc\frac{-\alpha }{2a_{1}}}{erfc\frac{\alpha }{2a_{1}%
}}.  \label{30}
\end{equation}
From $\left( \ref{26}\right) $ we get
\begin{equation}
\begin{array}{l}
{\lambda _{1}(2a_{1})^{-1}\frac{2}{\sqrt{\pi }}\left( -A_{0}\exp \left(
\frac{\alpha }{2a_{1}}\right) ^{2}+B_{0}\exp \left( -\left( \frac{\alpha }{%
2a_{1}}\right) ^{2}\right) \right) } \\
{=\lambda _{2}(2a_{2})^{-1}\frac{2}{\sqrt{\pi }}\left( -C_{0}\exp \left(
\frac{\alpha }{2a_{2}}\right) ^{2}+D_{0}\exp \left( -\left( \frac{\alpha }{%
2a_{2}}\right) ^{2}\right) \right) +\frac{L\gamma \alpha }{2}}%
\end{array}
\label{31}
\end{equation}
and from $\left( \ref{24}\right) $ we get
\begin{equation}
A_{n}=-B_{n}\frac{i^{n}erfc\frac{-\alpha }{2a_{1}}}{i^{n}erfc\frac{\alpha }{%
2a_{1}}}.  \label{32}
\end{equation}
Again from Stefan's condition expressed by equation $\left( \ref{26}\right) $%
, we get
\begin{eqnarray}
&&-\lambda _{1}(2a_{1})^{n-1}\left[ -A_{n}i^{n-1}erfc\frac{\alpha }{2a_{1}}%
+B_{n}i^{n-1}erfc\frac{-\alpha }{2a_{1}}\right]  \notag \\
&=&-\lambda _{2}(2a_{2})^{n-1}\left[ -C_{n}i^{n-1}erfc\frac{\alpha }{2a_{2}}%
+D_{n}i^{n-1}erfc\frac{-\alpha }{2a_{2}}\right] ,  \label{33}
\end{eqnarray}
\begin{equation}
P_{n-1}=\lambda _{1}(2a_{1})^{n-1}(A_{n}-B_{n})i^{n-1}erfc0,n=1,2,3,\cdots .
\label{34}
\end{equation}
Thus coefficients $A_{0},$ $B_{0}$ can be determined from equations $\left( %
\ref{30}\right) $, $\left( \ref{31}\right) $, coefficients $A_{n},$ $B_{n}$
from $\left( \ref{32}\right) $ and $\left( \ref{33}\right) $,$%
C_{0},C_{n},D_{n}$ from $\left( \ref{28}\right) $, $\left( \ref{29}\right) $
and $\left( \ref{27}\right) $ respectively and $P_{n}$ from $\left( \ref{34}%
\right) $.

\subsection{Test problem}

Let $f(x)=u_{m} +x,\; u_{m} =0,\lambda _{1} =\lambda _{2} =1,a_{1} =a_{2}
=1,L=\alpha =\gamma =1$.

We use Mathcad 15 for calculations and get following approximate values for $%
A_0=0.579$, $A_1=-3.004$, $A_2=-2.333\times 10^{-15}$, $B_0=-0.183$, $%
B_1=0.5 $, $B_2=0$ whereas exact values are $A_0=0.604$, $A_1=-3.004$, $%
A_2=0 $, $B_0=-0.191$, $B_1=0.5$, and $B_2=0$.

\begin{figure}[h!]
\centering
\includegraphics[width=6cm]{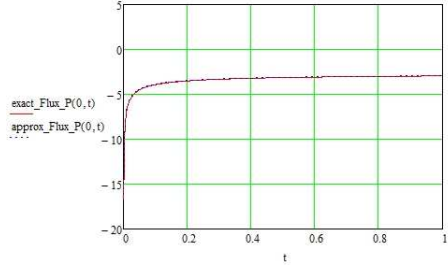}
\caption{Exact and approximate values of flux function}
\end{figure}

\begin{figure}[h!]
\centering
\includegraphics[width=6cm]{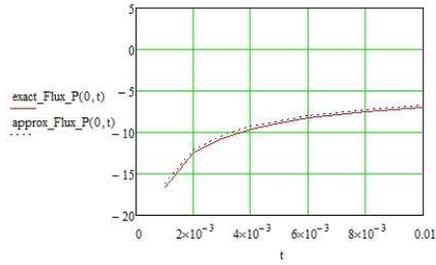}
\caption{Exact and approximate values of flux function for small t values}
\end{figure}

\begin{figure}[h!]
\centering
\includegraphics[width=6cm]{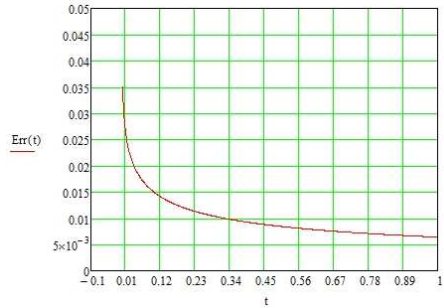}
\caption{Relative error}
\end{figure}

In Figures 1 and 2, the graphs of both reconstructed exact (\textbf{%
exact\_Flux\_P(0,t)}) and approximate (\textbf{approx\_Flux\_P(0,t)}) flux
functions are shown.

In Figure 3 we illustrate the graph of relative error function calculated by
following formula which reaches maximum value $3.5\%$ at point $x=0,~0\leq
t\leq 1$%
\begin{equation*}
Error_{rel}=\frac{\left\vert \text{exact flux-approximate flux}\right\vert }{%
\text{exact flux}}.
\end{equation*}

\section{Conclusion}

A mathematical model describing heat propagation in electric contacts is
constructed on the base of two phase spherical inverse Stefan problem. The
heat source P(t)is determined from equation $\left( \ref{21}\right) $, which
is due to our assumption could be arcing, bridging, Joule heating etc.
Temperature functions $\Theta _{1},\Theta _{2}$ which are given in the form
of series are determined whose coefficients $\;A_{n},\;B_{n},C_{n}\;$ and $%
D_{n}$ are also determined from equations $\left( \ref{15}\right) $, $\left( %
\ref{17}\right) $, $\left( \ref{18}\right) $ and $\left( \ref{20}\right) $.
In the test problem we used maximum principle for error estimate, the
deviation is 3.5\% for three points. For better precision more points has to
be taken and better computer characteristics are required.

\section*{Acknowledgements}
 This research is supported by the Ministry of Science and Education of the
Republic of Kazakhstan, Grant Number 0115RK00653. The authors would like
to thank Prof. S. N. Kharin (IMMM and KBTU, Kazakhstan) for their valuable comments and suggestions which were
helpful in improving the paper.

\end{document}